\documentclass{ifacconf}

\usepackage{graphicx}      
\usepackage{natbib}        
\usepackage{amsfonts} 
\usepackage{color}
\usepackage{appendix}
\usepackage{amsmath}
\usepackage{amssymb} 
\newtheorem{problem}{Problem}
\newtheorem{definition}{Definition}
\newtheorem{theorem}{Theorem}
\newtheorem{remark}{Remark}

\usepackage{algorithm}
\usepackage{algpseudocode}
\usepackage{CJK}
\usepackage{algorithmicx}

\newenvironment{proof}[1][Proof]{\noindent\textbf{#1.} }{\hfill $\blacksquare$\par}

\begin{document}
\begin{frontmatter}

\title{Communication-Aware Synthesis of Safety Controller for Networked Control Systems\thanksref{footnoteinfo}} 

\thanks[footnoteinfo]{This work is supported by the Guangdong Basic and Applied Basic Research Foundation (No:2026A1515010222), the Guangdong Provincial Project (No. 2024QN11X053) and by the Youth S$\&$T Talent Support Programme of GDSTA (No. SKXRC2025468).(Corresponding author: Bingzhuo Zhong.) }

\author[First]{Yihan Liu},
\author[First]{Meiqi Tian},
\author[First]{Teng Yan},
\author[First,Second]{Bingzhuo Zhong} 

\address[First]{The Thrust of Artificial Intelligence, Information Hub, Hong Kong
University of Science and Technology (Guangzhou), Guangzhou
511400, China (e-mail: yliu135@connect.hkust-gz.edu.cn, mtian837@connect.hkust-gz.edu.cn, tyan497@connect.hkust-gz.edu.cn).}
\address[Second]{The Thrust of Intelligent Transportation, System Hub, Hong Kong
University of Science and Technology (Guangzhou), Guangzhou
511400, China (e-mail: bingzhuoz@hkust-gz.edu.cn)}

\begin{abstract}       
Networked control systems (NCS) are widely used in safety-critical applications, but they are often analyzed under the assumption of ideal communication channels. This work focuses on the synthesis of safety controllers for discrete-time linear systems affected by unknown disturbances operating in imperfect communication channels. The proposed method guarantees safety by constructing ellipsoidal robust safety invariant (RSI) sets and verifying their invariance through linear matrix inequalities (LMI), which are formulated and solved as semi-definite programming (SDP). In particular, our framework simultaneously considers controller synthesis and communication errors without requiring explicit modeling of the communication channel. A case study on cruise control problem demonstrates that the proposed controller ensures safety in the presence of unexpected disturbances and multiple communication imperfections simultaneously.
\end{abstract}

\begin{keyword}
Communication-aware control, Safety controllers, Robust invariant sets, Communication errors
\end{keyword}

\end{frontmatter}

\section{Introduction}
\label{sec:section1}
With the rapid evolution of cyber-physical systems (CPS), the coupling of communication networks and intelligent control modules has become increasingly tight and complex \citep{Kim12}. However, this tight coupling introduces critical safety challenges. Packet drops, time delays, and bandwidth limitations in the both uplink (sensor to controller) and downlink (controller to actuator) can lead to disturbances in the closed-loop control, threatening system safety. 

Focusing on safety-critical system, a variety of model-based synthesis techniques can be employed to design controllers that ensure safe operation \citep{Reissig2017,Zhao2023,Zhong2022}. However, these methodologies are built on the assumption of ideal communication channels. This foundation becomes inadequate when control loops are closed over unreliable networks, making such frameworks insufficient for modern CPS applications.
Considering imperfect communication channels, extensive research in the NCS proposes the influence of communication imperfections on stability. Some works use Lyapunov-Krasovskii functions and Markov jump models to analyze time delays and packet loss, ensuring mean-square stability \citep{Zhang2001,Pang2025}. More recent studies further couple controller design with communication uncertainty, such as stochastic or event-triggered Model Predictive Control (MPC) that handle time delays and packet losses in the NCS \citep{Palmisano2023,Zhang2022}. While these controllers successfully stabilize the system, they generally lack formal safety guarantees since state and input constraints are not enforced.

To achieve formal safety guarantees over control systems, many approaches have been developed operating with imperfect communication channel, such as robust safe invariance (RSI) sets \citep{Hewing2018} and control barrier function (CBF) \citep{Clark2021,Xie2024,Panja2024}. These methods enforce forward invariance by embedding state and input constraints into the controller synthesis, typically relying on convex optimization or online quadratic programming. \citep{Akbarzadeh2025} recently proposed a probabilistic safety controller synthesis approach for NCS affected by packet losses and transmission delays. 
However, this method requires explicit re-modeling of the communication uncertainties for individual channel condition. For example, a model built for packet drop and time delay uncertainties does not work for systems that operate under bandwidth-limited communication. While this represents a meaningful step toward integrating safety into communication-aware controller, its dependence on channel-specific modeling limits its applicability across different network conditions.

This paper proposes a \textbf{communication-aware co-design framework} that integrates communication uncertainty into safety controller synthesis by leveraging its intrinsic dependence on the feedback controller without explicitly modeling the imperfect communication channel. This paper adopts an effect-oriented formulation, where different network effects are represented through a unified bounded estimation error. Since communication uncertainty in NCS is primarily driven by state estimation errors \citep{Goldsmith2005}, this work focuses on uplink communication errors and assumes that the downlink transmission is ideal for clarity of exposition.
The key contributions of our paper are summarized below:

\begin{enumerate}

\item We derive the system state error bound induced by imperfect communication and state estimation without explicitly modeling the communication channel.
\item We formulate the RSI set tolerating the state error of systems induced by imperfect communication and external disturbances. 
An LMI-based method is developed to jointly compute the RSI set and design the controller.
\item We develop a co-design framework that integrates communication error analysis with safety controller synthesis. This joint design achieves a balanced trade-off between communication efficiency and safety assurance in the NCS.
\end{enumerate}

In summary, this study provides a systematic approach for ensuring safety in NCS under imperfect communication without explicitly modeling the communication channel. The proposed method is computationally efficient since the co-design problem is formulated as an LMI-based semi-definite program that can be solved efficiently using convex optimization solvers.


\section{Preliminaries}
\label{section2}
\subsection{Notations}
\label{subsec:section2.1}
We use $\mathbb{R}$ and $\mathbb{N}$ to denote the sets of real and natural numbers, respectively.
Subscripts are applied in the standard manner. For example, $\mathbb{R}_{\ge 0}$ represents the set of non-negative real numbers,
the set of $n \times m$ matrices is written as $\mathbb{R}^{n \times m}$,
$0_{n \times m}$ denotes the zero matrix of that size, and $I_n$ denotes the $n \times n$ identity matrix. When the dimensions are clear from context, the indices are not explicitly written. 

For a matrix $Q$, we use $Q^\top$, $\det(Q)$, $Q(i,:)$, and $Q(i,j)$ to denote its transpose, determinant, $i$-th row, and $(i,j)$-th element, respectively.
For a symmetric matrix, $Q \succ 0$ $(Q \succeq 0)$ indicates that $Q$ is positive (semi-)definite. $||Q||_2$ denotes the largest singular value of $Q$.
The minimum and maximum eigenvalues of a symmetric matrix $Q$ are denoted by $\lambda_{\min}(Q)$ and $\lambda_{\max}(Q)$, respectively.

\subsection{System Model}
\label{subsec:section2.2}
\begin{figure}
\begin{center}
\includegraphics[width=8.8cm]{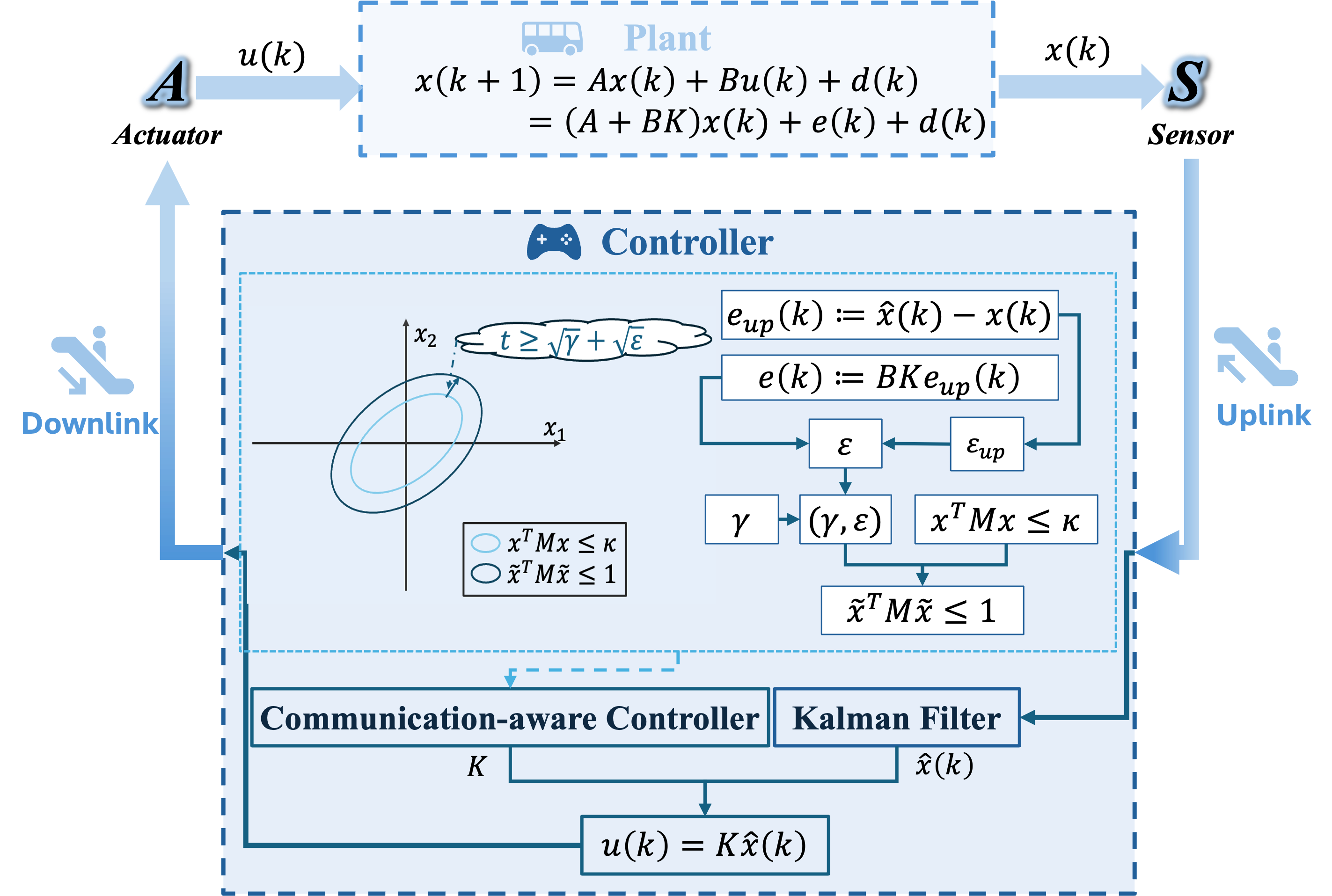}   
\caption{Overall Architecture of the Communication-aware Control System} 
\label{fig:fig1}
\end{center}
\end{figure}

As shown in Figure~\ref{fig:fig1}, we consider the discrete-time linear system
\begin{equation}
  x(k+1) = A x(k) + B u(k) + d(k), \quad k \in \mathbb{N},
  \label{eq:new-system}
\end{equation}
where $A \in \mathbb{R}^{n \times n}$ is the system matrix and $B \in \mathbb{R}^{n \times m}$ is the input matrix; 
$x(k) \in \textit{X}$ represents the state of the system, $u(k) \in \textit{U}$ is the control input, and $d(k) \in \Delta(\gamma)$ denotes external disturbance. Moreover, $\textit{X} \subseteq \mathbb{R}^n$ is the set of state,
\begin{equation}
    \textit{U}:=\{\ u \in \mathbb{R}^m | \max_{i=1,\dots,m} |u_i| \leq u_{max} \}\ \subseteq \mathbb{R}^m
    \label{eq:input}
\end{equation}
is the control input set where $u_i$ denotes the $i$-th component of $u$, and
\begin{equation}
  \Delta(\gamma) 
  := \{\, d \in \mathbb{R}^{n} \mid d^\top d \le \gamma, \gamma \in \mathbb{R}_{\ge 0} \}\
  \label{eq:new-disturbance}
\end{equation}
is the bounded disturbance set.

The control input produced by the controller is
\begin{equation}
    u(k):=K\hat x(k),
    \label{input}
\end{equation}
where $K \in \mathbb{R}^{m \times n}$ is the state-feedback gain matrix and $\hat x(k) \in \mathbb{R}^n$ is the estimated state produced by the Kalman filter, influenced by communication channels (e.g., packet drops and time delays) and Kalman filter estimation errors. 

Due to packet drops, transmission delays, and other communication imperfections, the controller may only have access to an outdated or inaccurate estimate of the system state. As a result, the estimated state $\hat x(k)$ may differ from the true state $x(k)$. We characterize this mismatch as uplink communication error by
\begin{equation}
    e_\text{up}(k) := \hat x(k)-x(k) \in \Delta (\varepsilon_\text{up}),
    \label{e_up}
\end{equation}
where $\varepsilon_\text{up}$ represents the upper bound of $e_\text{up}(k)$ with
\begin{equation}
 \Delta(\varepsilon_\text{up})
  := \{\, e_\text{up} \in \mathbb{R}^n \mid e_\text{up}^\top e_\text{up} \le \varepsilon_\text{up} , \varepsilon_\text{up}\in \mathbb{R}_{\ge 0}\,\}.
  \label{eq:new-uplink-set}
\end{equation}

Substituting~\eqref{input} and~\eqref{e_up} into \eqref{eq:new-system} generates the closed-loop dynamics equation
\begin{equation}
  x(k+1)
  = (A + BK)x(k) + BK e_{\mathrm{up}}(k) + d(k).
  \label{eq:new-closed-loop}
\end{equation}
Accordingly, we define the system state error induced by imperfect communication channel and state estimation as
\begin{equation}
    e(k) := BK\, e_{\mathrm{up}}(k) \in \Delta (\varepsilon),
    \label{eq:new-error}
\end{equation}
reflecting the impact of $e_{\text{up}}(k)$ on the system, where
\begin{equation}
  \Delta(\varepsilon)
  := \{\, e \in \mathbb{R}^n \mid e^\top e \le \varepsilon, \varepsilon \in \mathbb{R}_{\ge 0} \,\},
  \label{eq:new-epsilon-set}
\end{equation}
denotes the bounded system state error set with $\varepsilon$ being the system state error bound induced by imperfect communication channel and state estimation.
 

\subsection{Problem Statement}
\label{subsec:section2.3}
In this paper, we study the safety of discrete-time linear systems operating over imperfect communication channels. Here, safety is characterized by the requirement that the state trajectories of the system stay within a predefined safety set
\begin{equation}
   \textit{S} := \{\, x \in \mathbb{R}^n \mid c_i x \le 1\} \subset \textit{X},
   \label{eq:safety-set}
\end{equation}
where $c_i \in \mathbb{R}^n$, $i=1,\dots,n$ is state constraint. The main problem considered in this paper is then stated as follows.

\begin{problem}\label{prb}
Consider the system as in~\eqref{eq:new-system} operating under imperfect communication channel, which is
subject to 
system state error induced by imperfect communication and state estimation $e(k) \in \Delta(\varepsilon)$ as defined in~\eqref{eq:new-error}. 
The main problem is to design a safety envelope $\mathcal{S} \subseteq \textit{S}$, and a (communication-aware) state-feedback controller being in the form of~\eqref{input} such that for any initial condition $x(0) \in \mathcal{S}$, 
the resulting closed-loop trajectory satisfies
\[
   x(k) \in \mathcal{S}, \quad \forall k \in \mathbb{N},
\]
in the presence of disturbances and communication-induced errors.
\label{prob:problem1}
\end{problem}

\begin{remark}
    From~\eqref{eq:new-error}, one can see a coupling between the controller design and the state error on the system induced by communication channel, since $e(k)$ depends on both the state-feedback gain matrix $K$ and the uplink communication error $e_{\text{up}}(k)$. 
    Their mutual dependence introduces extra challenges to the synthesis of a communication-aware controller.
\end{remark}


\section{Communication Error bound analysis}
\label{sec:section3}
The objective of this section is to derive the system state error bound $\varepsilon$ induced by imperfect communication and state estimation, which will be used in the controller design. 
In Section~\ref{subsec:subsection3.1}, we first establish an upper bound $\varepsilon_{\text{up}}$ for the uplink communication error by using properties of the Kalman filter without requiring explicit modeling of the communication channel. 
Based on this result, Section~\ref{subsec:subsection3.2} derives the system state error bound $\varepsilon$ based on the uplink error bound $\varepsilon_\text{up}$.

\subsection{Uplink Communication Error Bound}
\label{subsec:subsection3.1}

In this subsection, we derive an upper bound $\varepsilon_{\text{up}}$ on the uplink communication error in~\eqref{eq:new-uplink-set}, which arises from imperfect communication channels and Kalman filter estimation errors. By analyzing the boundedness properties of the Kalman filter error covariance matrix \citep{Grewal2025}, we further show that 
$\varepsilon_{\text{up}}$ is influenced by the controller. The derivation does not require an explicit model of the communication channel, which ensures that the resulting bound remains applicable across different communication imperfections.

First, we define the Kalman filter used for state estimation.
\begin{definition}
\label{def:definition3}
Consider the discrete-time linear system in~\eqref{eq:new-closed-loop}, we define
\begin{gather}
    G := A+BK, \label{eq:G} \\
    x_k = G x_{k-1} + d_{k-1} +e_{k-1} + w_{k-1}, \label{eq:sys-state} \\
    y_{k} = H x_{k} + v_{k}.
\end{gather}
where $A \in \mathbb{R}^{n \times n}$ is the system matrix, $B \in \mathbb{R}^{n \times m}$ is the input matrix and $K \in \mathbb{R}^{m \times n}$ is the state-feedback gain matrix as in~\eqref{input}; $x_k \in \mathbb{R}^n$ is the system state, $d_k \in \mathbb{R}^n$ is external disturbance as in~\eqref{eq:new-disturbance} and $e_k \in \mathbb{R}^n$ is the system state error as in~\eqref{eq:new-error}. The process noise $w_k \in \mathbb{R}^n$ and the measurement noise $v_k \in \mathbb{R}^p$ $(p \le n)$ are mutually uncorrelated zero-mean Gaussian variables with covariances $Q \in \mathbb{R}^{n \times n}$ and $R \in \mathbb{R}^{p \times p}$, respectively. The measurement is denoted by $y_k \in \mathbb{R}^p$, with $H \in \mathbb{R}^{p \times n}$ representing the known measurement matrix. We define $\bar Q
:=Q+\gamma I_n$ as the conservative process uncertainty covariance, which accounts for both the process noise and the bounded disturbance.

The prediction steps of Kalman filter are given by
\begin{gather}
 \hat{x}_{k|k-1} = G \hat{x}_{k-1}, \\
    P_{k|k-1} = G P_{k-1} G^\top + \bar Q.
\end{gather}

The correction steps of Kalman filter are given by
\begin{equation}
    \hat{x}_k = \hat{x}_{k|k-1} + T_k \big(y_k - H \hat{x}_{k|k-1}\big), 
\end{equation}
\begin{equation}
\begin{aligned}
    P_k &= (I_n - T_k H) P_{k|k-1} \\
        &= \Big(P_{k|k-1}^{-1} + H^\top R^{-1} H\Big)^{-1}, 
\end{aligned}
\end{equation}
\begin{equation}
    T_k = P_{k|k-1} H^\top \Big(H P_{k|k-1} H^\top + R \Big)^{-1},
\end{equation}

where $\hat{x}_k \in \mathbb{R}^n$ denotes the state estimation from Kalman filter, $P_k \in \mathbb{R}^{n \times n}$ represents the error covariance matrix at time step $k$, with the initial estimation error covariance is denoted by $P_0 \succeq 0$, and $T_k \in \mathbb{R}^{n \times p}$ is the Kalman gain matrix at time step $k$.
\end{definition}

To characterize how communication imperfections affect the state estimation error, we focus on the error covariance of the Kalman filter. When measurements are unavailable due to packet drops or delays, the Kalman filter performs prediction-only updates. Consequently, the estimation covariance increases, which is reflected in the bound of $e_\text{up}(k)$.
The estimation error covariance $P_k$ of the Kalman filter 
has the following boundedness property.

\begin{theorem}
\label{the:theorem4}
\citep{Li2016} Considering the Kalman filter as in Definition~\ref{def:definition3}, the error covariance $P_k$ has an upper bound
\begin{equation}
    P_k \preceq \bar p_k I_n, \quad \forall k \in \mathbb{N},
\end{equation}
where $p_k$ is recursively defined as
\begin{equation}
    \bar p_k = \bar p_0 \bar g^{2k} + \bar q \sum_{n=0}^{k-1} \bar g^{2n}, 
    \quad \bar p_0 = \lambda_{\max}(P_0),
    \label{eq:pk bar}
\end{equation}
with $\bar g \in \mathbb{R}$ satisfying $G G^\top \preceq \bar g^2 I_n$, 
$\bar q \in \mathbb{R}$ satisfying $\bar Q\preceq \bar q I_n$, 
and $P_0 \in \mathbb{R}^{n \times n}$ denoting the initial estimation error covariance. 
\end{theorem}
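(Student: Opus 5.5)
The argument is an induction on $k$. It rests on two facts: the measurement update never increases the covariance, and the prediction step is monotone in the Loewner order. Together these give a scalar recursion for $\bar p_k$, and that recursion unrolls to \eqref{eq:pk bar}.

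\textbf{Base case.} For $k=0$ we have $P_0 \preceq \lambda_{\max}(P_0) I_n = \bar p_0 I_n$, which holds because $P_0$ is symmetric positive semidefinite.

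\textbf{Covariance shrinkage in the correction step.} We use the information form $P_k = (P_{k|k-1}^{-1} + H^\top R^{-1} H)^{-1}$. Since $H^\top R^{-1}H \succeq 0$ and matrix inversion is order-reversing on positive definite matrices, $P_k \preceq P_{k|k-1}$. If $P_{k|k-1}$ is only semidefinite, we instead use the form $P_k = P_{k|k-1} - P_{k|k-1}H^\top(HP_{k|k-1}H^\top+R)^{-1}HP_{k|k-1}$. The subtracted term is positive semidefinite, so the same inequality follows. This step also covers packet drops: a prediction-only update sets $P_k = P_{k|k-1}$, and the inequality holds trivially.

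\textbf{Inductive step.} Assume $P_{k-1} \preceq \bar p_{k-1} I_n$. Congruence preserves the Loewner order, so
\[
G P_{k-1} G^\top \preceq \bar p_{k-1} G G^\top \preceq \bar p_{k-1}\bar g^2 I_n,
\]
where the last inequality uses $\bar p_{k-1}\ge 0$ together with $GG^\top \preceq \bar g^2 I_n$. Adding $\bar Q \preceq \bar q I_n$ gives
\[
P_{k|k-1} \preceq (\bar g^2 \bar p_{k-1} + \bar q) I_n .
\]
Combining this with the shrinkage step yields $P_k \preceq \bar p_k I_n$, where $\bar p_k := \bar g^2 \bar p_{k-1} + \bar q$. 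Unrolling this affine scalar recursion gives
\[
\bar p_k = \bar g^{2k}\bar p_0 + \bar q\sum_{j=0}^{k-1}\bar g^{2j},
\]
which is exactly \eqref{eq:pk bar}. A short secondary induction confirms this closed form.

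\textbf{Main obstacle.} The delicate point is the matrix-inequality justification of $P_k \preceq P_{k|k-1}$, especially when $P_{k|k-1}$ may be singular. This is why the difference form is preferable to the inverse form. We also need $\bar p_{k-1}\ge 0$ so that scaling $GG^\top \preceq \bar g^2 I_n$ preserves the inequality, and this follows immediately by induction.
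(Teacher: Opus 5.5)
Your argument is correct. It also supplies something the paper does not: the paper gives no proof of this statement and simply imports it from \citep{Li2016}.

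Your induction needs only two Loewner-order facts. First, the correction step is non-expansive, $P_k\preceq P_{k|k-1}$. Your difference form requires nothing beyond $R\succ 0$, so a singular $P_{k|k-1}$ is no obstacle. Second, the prediction step is monotone under congruence, giving $GP_{k-1}G^\top+\bar Q\preceq(\bar g^2\bar p_{k-1}+\bar q)I_n$. The nonnegativity of $\bar p_{k-1}$ that you invoke follows from $\bar p_0\ge 0$ and $\bar q\ge\lambda_{\max}(\bar Q)\ge 0$, since $\bar Q=Q+\gamma I_n\succeq 0$.

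What your route buys is transparency. It uses the measurements only through the fact that a correction cannot enlarge the covariance. So the bound holds for every $\bar g$ and for any interleaving of corrected and prediction-only steps (dropped or late packets). That is exactly the measurement-independence the paper asserts after Theorem~\ref{the:theorem5} but never derives.

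The same observation shows that the stated $\bar p_k$ is simply the open-loop, prediction-only bound. That is why it diverges for $\bar g>1$ whenever $\bar p_0+\bar q>0$. Boundedness in that regime requires genuinely exploiting the correction step through observability, which is the analysis Remark~\ref{remark2} defers to \citep{Li2016}.

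Finally, both the statement and your proof concern the Riccati iterate $P_k$ of Definition~\ref{def:definition3}. Identifying it with the actual second moment of $\hat x_k-x_k$ is a separate modeling step, which the paper takes only in the proof of Theorem~\ref{the:theorem5}. Note that the process model in Definition~\ref{def:definition3} contains the term $e_{k-1}$, which $\bar Q$ does not account for.
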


The upper bound of $P_k$ is fundamental for obtaining a computable
uplink communication error bound $\varepsilon_{\text{up}}$ in the next step of the analysis.

\begin{remark}
The bound in Theorem~\ref{the:theorem4} depends on the closed-loop matrix $G = A + BK$.  
Therefore, the evolution of $P_k$ is influenced by the choice of the controller 
gain $K$.  
This establishes a tight coupling between state-estimation performance 
and controller design, which will be solved by controller synthesis in Section~\ref{subsec:subsection4.2}.
\end{remark}

To characterize the long-term behavior of the error covariance bound, we introduce
a standard notion of exponential boundedness as below.

\begin{definition}
\label{def:definition4}
A sequence $\{\theta_k\}$ is said to be exponentially bounded with exponent $a$ if there exist constants $0 < a \le 1$, $C_1 \ge 0$, and $C_2 > 0$ such that
\begin{equation}
    \theta_k \le C_1 + C_2 (1-a)^k, \quad \forall k \in \mathbb{N}.
\end{equation}

\end{definition}

To obtain a computable bound for $P_k$, we seek a tractable upper bound
for the sequence $\{\bar p_k\}$.  
Using the exponential boundedness concept in Definition~\ref{def:definition4},
the next theorem provides the bound of uplink communication error.

\begin{theorem}
\label{the:theorem5}
Suppose that $\bar g^2 \leq 1$, with $\bar{g}$ being defined in Theorem~\ref{the:theorem4}.
According to Definition~\ref{def:definition4}, the upper bound of the error covariance $\bar p_k$ is exponentially bounded as
\begin{gather}
\bar{p}_k \leq \bar{p}_0 \bar g^2 + \frac{\bar{q}}{1 - \bar g^2}, \quad \forall k \in \mathbb{N}, 
\end{gather}
and one has
\begin{gather}
\varepsilon_{\mathrm{up}} = (\bar{p}_0 \bar g^2 + \frac{\bar{q}}{1 - \bar g^2}\,) \chi^2_{n,\,1-\delta}, \label{eq:epsilon_up}
\end{gather}
where $\chi^2_{n,1-\delta}$ denotes the chi-squared scaling factor associated 
with dimension $n$, and $\delta$ is selected as a small design parameter that 
sets the truncation level for the admissible Gaussian noise energy.

\end{theorem}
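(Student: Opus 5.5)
The plan is to work directly from the closed-form expression \eqref{eq:pk bar} of Theorem~\ref{the:theorem4}, bound its two terms separately using $\bar g^2 \le 1$, and then convert the resulting covariance bound into a norm bound on $e_\text{up}(k)$ via a chi-squared quantile argument.

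\textbf{Step 1 (deterministic bound on $\bar p_k$).} First I split $\bar p_k = \bar p_0 \bar g^{2k} + \bar q \sum_{n=0}^{k-1}\bar g^{2n}$ into its two terms.
\begin{itemize}
\item \emph{Geometric sum.} I will need the strict inequality $\bar g^2<1$; the case $\bar g^2 = 1$ makes the right-hand side infinite and is vacuous. Then $\sum_{n=0}^{k-1}\bar g^{2n} = \frac{1-\bar g^{2k}}{1-\bar g^2} \le \frac{1}{1-\bar g^2}$.
\item \emph{Transient term.} For $k\ge 1$, monotonicity of powers of a number in $[0,1]$ gives $\bar g^{2k}\le \bar g^2$, so $\bar p_0\bar g^{2k}\le \bar p_0\bar g^2$.
\end{itemize}
The case $k=0$ must be treated separately, because $\bar p_0 \le \bar p_0\bar g^2$ fails unless $\bar g=1$. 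I would handle it by stating the bound for $k\ge 1$ and taking $\max\{\bar p_0,\ \bar p_0\bar g^2+\bar q/(1-\bar g^2)\}$ at $k=0$. Alternatively, I would interpret the initial step as already including one prediction, since $P_{1|0}$ carries the factor $\bar g^2$. I expect this indexing issue to be the main technical wrinkle in the deterministic part.

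\textbf{Step 2 (exponential boundedness).} To match Definition~\ref{def:definition4}, I rewrite the recursion exactly as
\[
\bar p_k = \frac{\bar q}{1-\bar g^2} + \Big(\bar p_0 - \frac{\bar q}{1-\bar g^2}\Big)\bar g^{2k}.
\]
I then take $a = 1-\bar g^2\in(0,1]$, $C_1 = \bar q/(1-\bar g^2)\ge 0$, and $C_2 = \max\{\bar p_0 - C_1,\ \eta\}$ for some small $\eta>0$, which ensures $C_2>0$. Hence $\{\bar p_k\}$ is exponentially bounded with exponent $1-\bar g^2$. Combined with Step 1, this gives the uniform bound $\bar p := \bar p_0\bar g^2 + \bar q/(1-\bar g^2)$, and by Theorem~\ref{the:theorem4}, $P_k \preceq \bar p\, I_n$ for all $k$.

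I also need to note that prediction-only steps, which occur under packet drops or delays, are covered by the same bound. The correction step satisfies $P_k = (P_{k|k-1}^{-1}+H^\top R^{-1}H)^{-1}\preceq P_{k|k-1}$, so skipping it only removes a contraction, while the recursion in Theorem~\ref{the:theorem4} is already built from the prediction step alone. This is what makes the bound channel-agnostic.

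\textbf{Step 3 (from covariance to $\varepsilon_\text{up}$).} Model $e_\text{up}(k)$ as zero-mean Gaussian with covariance $P_k$. Then $e_\text{up}^\top P_k^{-1} e_\text{up}\sim\chi^2_n$, so with probability $1-\delta$ we have $e_\text{up}^\top P_k^{-1}e_\text{up}\le \chi^2_{n,1-\delta}$. Using the Rayleigh quotient,
\[
e_\text{up}^\top e_\text{up}\le \lambda_{\max}(P_k)\, e_\text{up}^\top P_k^{-1}e_\text{up}\le \bar p\,\chi^2_{n,1-\delta},
\]
which yields \eqref{eq:epsilon_up}.

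I expect the main obstacle to be justifying this probabilistic step. The disturbance $d$ and the error $e$ are bounded rather than Gaussian, and they have been absorbed into $\bar Q$ only conservatively. I would therefore state explicitly that the bound $e_\text{up}\in\Delta(\varepsilon_\text{up})$ holds at confidence level $1-\delta$ under the Gaussian truncation interpretation, which is the sense in which $\delta$ is introduced in the statement.
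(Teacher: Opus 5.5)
Your proposal is correct and follows essentially the same route as the paper: sum the geometric series under $\bar g^2\le 1$, read off exponential boundedness with exponent $1-\bar g^2$, and then treat $e_{\mathrm{up}}(k)$ as approximately Gaussian with covariance $P_k\preceq\bar p_k I_n$ to get the chi-squared quantile bound. The paper writes this last step as $e_{\mathrm{up}}^\top e_{\mathrm{up}}\le\lambda_{\max}(P_k)\|\zeta\|^2$ with $\zeta\sim\mathcal N(0,I_n)$, which matches your Mahalanobis/Rayleigh-quotient step when $P_k\succ 0$. Your caveats are well founded and tighten the paper's argument, which only derives $\bar p_k\le\bar p_0\bar g^{2k}+\bar q/(1-\bar g^2)$ and glosses over two points: the need for $\bar g^2<1$, and the fact that the stated bound can fail at $k=0$, since there it requires $\bar p_0(1-\bar g^2)^2\le\bar q$. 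Your use of $P_k\preceq P_{k|k-1}$ to show the bound also covers prediction-only steps justifies a claim the paper only asserts informally in the main text.
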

The proof of Theorem~\ref{the:theorem5} is presented in the Appendix. 
It is worth noting that the upper bound of \(\overline{p}_k\) is independent of real-time measurement data, being entirely determined by the system matrix $G$, noise covariance $\bar Q$, and initial error covariance $P_0$. Consequently, this bound maintains its effectiveness under imperfect channels, such as packet drops, time delays, or bandwidth constraints. 
\begin{remark}\label{rem}
    Since $\bar{g}$ depends on $K$, we obtain an error bound $\varepsilon_\text{up}$ of uplink communication that is coupled with the controller. The expression in~\eqref{eq:epsilon_up} generates a computable uplink communication error bound 
$\varepsilon_{\text{up}}$, providing a basis for calculating the system state error bound $\varepsilon$.   
Here, $\bar g$ is determined by~\eqref{eq:G} 
and will be specified in the controller synthesis of Section~\ref{subsec:subsection4.2}, 
where the requirement $\bar g \le 1$ is enforced within the co-design framework.  
\end{remark}
\begin{remark}
\label{remark2}
When $\bar g^2 \leq 1$, the upper bound of $\bar p_k$ converges to a finite value, indicating that the steady-state estimation error is bounded even in the presence of imperfect communication. 
Conversely, if $\bar g^2 > 1$, the bound may diverge unless additional observability conditions are imposed such as uniform observability \citep{Li2016}. 
Since such scenarios are outside the scope of the considered imperfect communication setting, they are excluded from the analysis.
\end{remark}


So far, Section~\ref{subsec:subsection3.1} has successfully derived the uplink communication error bound $\varepsilon_{\text{up}}$ in~\eqref{eq:epsilon_up}. $\varepsilon_{\text{up}}$ captures the impact of imperfect communication channels and Kalman filtering estimation, without requiring explicit modeling of channel impairments such as packet loss or time delay.
Based on this result, the next subsection incorporates $\varepsilon_\text{up}$, which is coupled with the controller through its dependence on state-feedback gain matrix $K$, into the closed-loop error dynamics~\eqref{eq:new-error} to obtain the system state error bound $\varepsilon$ as specified in~\eqref{eq:new-epsilon-set}.

\subsection{System State Error Bound}
\label{subsec:subsection3.2}
In this subsection, we proceed to compute the system state error upper bound $\varepsilon$ based on the error dynamics described in~\eqref{eq:new-error} by incorporating the previously obtained uplink communication error bound $\varepsilon_\text{up}$ into~\eqref{eq:new-error}.
Subsequently, substituting $\varepsilon$ into the controller synthesis will generate a fully integrated co-design framework in Section~\ref{sec:section4}.

\begin{theorem}
\label{the:theorem7}
Consider the system as in~\eqref{eq:new-system} and the system state error dynamics in~\eqref{eq:new-error}.
If $\bar{g}^2 \le 1$, then $e(k)^\top e(k) \le \varepsilon$ for all $k \in \mathbb{N}$ with
\begin{equation}
    \varepsilon = (\bar g + ||A||_2)^2\varepsilon_\text{up},
    \label{eq:epsilon}
\end{equation}
where $\bar g \in \mathbb{R}$ with $G G^\top \preceq \bar g^2 I_n$ as defined in Theorem~\ref{the:theorem4}, $A\in \mathbb{R}^{n \times n}$ is the system matrix as in~\eqref{eq:new-system}, $e(k) \in \mathbb{R}^n$ is the system state error defined in~\eqref{eq:new-error}, $\varepsilon_\text{up} \in \mathbb{R}$ is the upper bound of uplink communication error as in~\eqref{eq:epsilon_up}.
\end{theorem}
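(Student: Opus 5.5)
The plan is to bound $e(k)=BK\,e_{\mathrm{up}}(k)$ through an operator-norm estimate on $BK$. I would not treat $BK$ as a free matrix. Instead, I would write it as $BK = G - A$, using the definition of $G$ in~\eqref{eq:G}. This is the step that brings $\bar g$ and $\|A\|_2$ into the bound.

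First, I will bound $\|G\|_2$. The condition $GG^\top \preceq \bar g^2 I_n$ from Theorem~\ref{the:theorem4} gives $\lambda_{\max}(GG^\top)\le \bar g^2$. Hence $\|G\|_2=\sqrt{\lambda_{\max}(GG^\top)}\le |\bar g|$. Without loss of generality I take $\bar g\ge 0$, since only $\bar g^2$ enters its defining condition.

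Second, I will apply the triangle inequality for the spectral norm:
\begin{equation*}
\|BK\|_2=\|G-A\|_2\le \|G\|_2+\|A\|_2\le \bar g+\|A\|_2 .
\end{equation*}

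Third, I will combine this with submultiplicativity:
\begin{equation*}
e(k)^\top e(k)=\|BK\,e_{\mathrm{up}}(k)\|_2^2\le \|BK\|_2^2\,\|e_{\mathrm{up}}(k)\|_2^2\le (\bar g+\|A\|_2)^2\, e_{\mathrm{up}}(k)^\top e_{\mathrm{up}}(k).
\end{equation*}

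The remaining ingredient is $e_{\mathrm{up}}(k)^\top e_{\mathrm{up}}(k)\le \varepsilon_{\mathrm{up}}$ for all $k$, with $\varepsilon_{\mathrm{up}}$ as in~\eqref{eq:epsilon_up}. Here the hypothesis $\bar g^2\le 1$ is used. It lets me invoke Theorem~\ref{the:theorem5}, which gives the uniform-in-$k$ covariance bound $P_k\preceq \bar p_k I_n \preceq (\bar p_0\bar g^2+\bar q/(1-\bar g^2))I_n$. The chi-squared truncation in Theorem~\ref{the:theorem5} then turns this covariance bound into the energy bound $\varepsilon_{\mathrm{up}}$ on $e_{\mathrm{up}}(k)$. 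I will take that conclusion exactly as Theorem~\ref{the:theorem5} and~\eqref{eq:new-uplink-set} state it, namely $e_{\mathrm{up}}(k)\in\Delta(\varepsilon_{\mathrm{up}})$. Substituting gives $e(k)^\top e(k)\le (\bar g+\|A\|_2)^2\varepsilon_{\mathrm{up}}=\varepsilon$ for all $k\in\mathbb{N}$.

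The main obstacle is not the algebra, which is only norm inequalities. It is making precise in what sense $e_{\mathrm{up}}(k)\in\Delta(\varepsilon_{\mathrm{up}})$ holds. The Gaussian error is only bounded up to the truncation level $1-\delta$, so the conclusion is really deterministic on the truncated event. I would state that the result inherits exactly the interpretation of~\eqref{eq:new-uplink-set} adopted after Theorem~\ref{the:theorem5}, and not try to strengthen it. A secondary point is the degenerate case $\bar g^2=1$, where $\bar q/(1-\bar g^2)$ is undefined. I would note that the statement is meaningful for $\bar g^2<1$, or else $\varepsilon_{\mathrm{up}}=\infty$ and the bound holds trivially.
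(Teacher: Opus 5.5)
Your proposal is correct and follows the paper's proof: the paper likewise applies submultiplicativity to get $\|e(k)\|_2^2 \le \|BK\|_2^2\,\varepsilon_{\mathrm{up}}$, then bounds $\|BK\|_2 \le \|A+BK\|_2 + \|A\|_2 \le \bar g + \|A\|_2$ by the triangle inequality, taking $e_{\mathrm{up}}(k)\in\Delta(\varepsilon_{\mathrm{up}})$ as given by~\eqref{eq:new-uplink-set}. Your two caveats are legitimate points that the paper's proof passes over silently: that membership holds only with probability at least $1-\delta$, and the case $\bar g^2=1$ leaves $\varepsilon_{\mathrm{up}}$ undefined.
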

The proof of Theorem~\ref{the:theorem7} is presented in the Appendix. 
So far, we have established a computable upper bound $\varepsilon$ for the system state error induced by imperfect communication channel and state estimation. 

\section{Safety Controllers for Communication-Aware Systems}
\label{sec:section4}
This section presents the controller-synthesis framework using the system state error
bound $\varepsilon$ derived in Section~\ref{sec:section3}. Our objective is to construct an RSI set and a corresponding
state-feedback controller that guarantee safe operation under external disturbance and
communication error as stated in Problem~\ref{prob:problem1}.

\subsection{$(\gamma,\varepsilon)$-Robust Safety Invariant Set}
\label{subsec:subsection4.1}
In this part, we propose the notion of $(\gamma,\varepsilon)$-RSI sets for the system operating in imperfect communication channel as presented in Section \ref{subsec:section2.2}. 
Specifically, the controller must guarantee that the system state trajectory remains inside the $(\gamma,\varepsilon)$-RSI set in the presence of both disturbances and communication errors. We begin by formally defining $(\gamma,\varepsilon)$-RSI sets.

\begin{definition}
\label{def:definition1} 
Take the system \eqref{eq:new-closed-loop} with disturbance $d(k) \in \Delta(\gamma)$ as in~\eqref{eq:new-disturbance} and system state error $e(k)=BKe_\text{up}(k) \in \Delta(\varepsilon)$ as in~\eqref{eq:new-epsilon-set}. A $(\gamma,\varepsilon)$-RSI set $\mathcal{S}$ relative to the safety set $\textit{S} \subseteq \mathbb{R}^n$ is \begin{equation} 
\mathcal{S} := \{ x \in \mathbb{R}^n \mid x^\top M x \le 1 \} \subset \textit{S}, 
\label{eq:envelope}
\end{equation} 
so that $\forall x \in \mathcal{S}$, $\forall e \in \Delta(\varepsilon)$, $\forall d \in \Delta(\gamma)$, 
\begin{align}
x(k+1)&=Ax(k)+Bu(k)+d(k)\\
&=(A+BK)x(k) + d(k) + e(k) \in \mathcal{S},
\end{align}
when the controller $u(k)=K\hat x(k)$ associated with $\mathcal{S}$ is applied, where $M \in \mathbb{R}^{n \times n}$ is positive-definite, and $K \in \mathbb{R}^{m \times n}$.
\end{definition}
Definition~\ref{def:definition1} characterizes a $(\gamma,\varepsilon)$-RSI set as an ellipsoidal
region that remains invariant under our closed-loop dynamics in the presence of
disturbances and communication errors.  
This definition directly gives the following property.
\begin{theorem} 
\label{the:therom1} 
If the system \eqref{eq:new-closed-loop} has a $(\gamma,\varepsilon)$-RSI set $\mathcal{S}$ as in~\eqref{eq:envelope}, and $x(0) \in \mathcal{S}$, then the closed-loop trajectory under $u=K\hat{x}$ satisfies 
\begin{equation}
   x(k) \in \mathcal{S},\ \forall k \in \mathbb{N}. 
\end{equation} 
\end{theorem}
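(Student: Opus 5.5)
The proof is a standard induction on $k$ that uses only Definition~\ref{def:definition1}. The base case is the hypothesis $x(0)\in\mathcal{S}$. For the inductive step, assume $x(k)\in\mathcal{S}$ for some $k\in\mathbb{N}$. We must show that $x(k+1)$, produced by the closed-loop dynamics \eqref{eq:new-closed-loop} under $u(k)=K\hat x(k)$, also lies in $\mathcal{S}$.

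First, rewrite the update in the form appearing in Definition~\ref{def:definition1}. Substituting $\hat x(k)=x(k)+e_{\mathrm{up}}(k)$ from \eqref{e_up} into \eqref{eq:new-system} gives
\begin{equation*}
x(k+1)=(A+BK)x(k)+d(k)+e(k),
\end{equation*}
with $e(k)=BKe_{\mathrm{up}}(k)$ as in \eqref{eq:new-error}.

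Next, check that the realized signals are admissible. By the standing assumptions of the model, $d(k)\in\Delta(\gamma)$ from \eqref{eq:new-disturbance}. By \eqref{eq:new-error}, or by Theorem~\ref{the:theorem7} when $\varepsilon$ is chosen as in \eqref{eq:epsilon} and $\bar g^2\le 1$, we also have $e(k)\in\Delta(\varepsilon)$. Together with $x(k)\in\mathcal{S}$, this places the triple $(x(k),e(k),d(k))$ inside the set over which the universal quantifier in Definition~\ref{def:definition1} ranges.

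The defining property of a $(\gamma,\varepsilon)$-RSI set therefore applies and gives $x(k+1)\in\mathcal{S}$. By induction, $x(k)\in\mathcal{S}$ for all $k\in\mathbb{N}$. Since $\mathcal{S}\subset S$, this also gives safety as required in Problem~\ref{prob:problem1}.

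The only delicate point is the admissibility step. Definition~\ref{def:definition1} quantifies over arbitrary $e\in\Delta(\varepsilon)$ and $d\in\Delta(\gamma)$, independently of $x$. The actual $e(k)$ depends on the estimator history and hence on the past trajectory. This dependence does not matter, because the invariance condition holds uniformly over all admissible $e$ and $d$ for every $x\in\mathcal{S}$. We only need the bound $e(k)^\top e(k)\le\varepsilon$ to hold at every time step, and this is exactly what \eqref{eq:new-error} or Theorem~\ref{the:theorem7} provides.
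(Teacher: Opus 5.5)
Your induction on $k$ is correct and is essentially the argument the paper intends: the paper gives no written proof and simply states that the property follows directly from Definition~\ref{def:definition1}, with admissibility of $e(k)$ resting on the standing modeling assumption \eqref{eq:new-error}. One caution: do not lean on Theorem~\ref{the:theorem7} as a per-step deterministic guarantee. Its underlying bound in Theorem~\ref{the:theorem5} holds only with probability at least $1-\delta$ (under a Gaussian approximation), so the assumption \eqref{eq:new-error} is the proper justification.
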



Building on Theorem \ref{the:therom1}, we need the following theorem to compute $(\gamma,\varepsilon)$-RSI sets.

\begin{theorem}
\label{the:theorem2}
Consider the imperfect communication system \eqref{eq:new-closed-loop}.  
Let $K \in \mathbb{R}^{m \times n}$, $M \in \mathbb{R}^{n \times n}$ with $M \succ 0$, and let $\gamma, \varepsilon \in \mathbb{R}_{\ge 0}$ denote the disturbance bound and the system state error bound induced by imperfect communication as in~\eqref{eq:new-disturbance} and~\eqref{eq:new-epsilon-set}. Then
\begin{equation}
   \big((A+BK)x + d + e \big)^\top 
   M \big((A+BK)x + d + e \big) \le 1,
   \label{eq:thm-ineq}
\end{equation}
holds $\forall x \in \mathbb{R}^n$ with $x^\top M x \le 1$, $\forall e \in \Delta(\varepsilon)$, and $\forall d \in \Delta(\gamma)$, 
if and only if there exists contraction factor $\kappa \in (0,1]$ such that the following two conditions are satisfied:

\begin{enumerate}
  \item[(1)] (\textbf{Cond.A})  
  \[
      x^\top (A+BK)^\top M (A+BK)x \le \kappa,
  \]
  $\forall x \in \mathbb{R}^n \text{ with } x^\top M x \le 1$;

  \item[(2)] (\textbf{Cond.B})  
  \[
      (z + \tilde{d} + \tilde{e})^\top M (z + \tilde{d} + \tilde{e}) \le 1,
  \]
 $\forall z \in \mathbb{R}^n$ with $z^\top M z \le \kappa$,  
  and $\forall \tilde{e} \in \Delta(\varepsilon)$, $\forall \tilde{d} \in \Delta(\gamma)$.
\end{enumerate}
\end{theorem}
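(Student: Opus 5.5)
The plan is to prove the two directions separately. Throughout, $G=A+BK$, $\mathcal{E}_\kappa:=\{z\in\mathbb{R}^n \mid z^\top M z\le\kappa\}$, and $\mathcal{S}=\mathcal{E}_1$. Sufficiency is essentially a substitution. Necessity requires choosing $\kappa$ explicitly as the tightest contraction of $\mathcal{S}$ under $G$. I expect the necessity direction to be the main obstacle, and I am not certain the stated equivalence holds without an extra assumption.

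\emph{Sufficiency.} Assume Cond.A and Cond.B hold for some $\kappa\in(0,1]$. Fix $x$ with $x^\top M x\le 1$, $e\in\Delta(\varepsilon)$ and $d\in\Delta(\gamma)$, and set $z:=Gx$. Cond.A gives $z^\top M z = x^\top G^\top M G x\le\kappa$, so $z\in\mathcal{E}_\kappa$. Applying Cond.B with $\tilde d=d$ and $\tilde e=e$ then gives $(z+d+e)^\top M(z+d+e)\le 1$, which is exactly \eqref{eq:thm-ineq}.

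\emph{Necessity, choice of $\kappa$.} Assume \eqref{eq:thm-ineq} holds for all admissible $(x,d,e)$. I would define
\[
\kappa^\star:=\max_{x^\top M x\le 1} x^\top G^\top M G x=\lambda_{\max}\big(M^{-1/2}G^\top M G M^{-1/2}\big).
\]
Setting $d=e=0$ in \eqref{eq:thm-ineq} gives $\kappa^\star\le 1$. If $\kappa^\star>0$, take $\kappa=\kappa^\star$, and Cond.A holds by construction. The degenerate case $G=0$ gives $\kappa^\star=0$; there I would take a small $\kappa>0$ and argue directly that $\{d+e\}$ lies in the interior of $\mathcal{S}$, with the boundary case handled separately.

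\emph{Necessity, Cond.B (the hard step).} The disturbance and error enter only through $w:=d+e$. The Minkowski sum $\Delta(\gamma)+\Delta(\varepsilon)$ is the Euclidean ball of radius $\rho:=\sqrt{\gamma}+\sqrt{\varepsilon}$, so \eqref{eq:thm-ineq} reads
\[
\sup_{x\in\mathcal{S}}\ \sup_{\|w\|_2\le\rho}\|Gx+w\|_M\le 1,
\]
while Cond.B requires the same bound with $Gx$ replaced by an arbitrary $z\in\mathcal{E}_{\kappa}$. Since $G\mathcal{S}\subseteq\mathcal{E}_{\kappa^\star}$, the issue is the reverse: the worst-case $z$ over the full ellipsoid $\mathcal{E}_{\kappa^\star}$ must be attained, or dominated, by a point of the possibly degenerate image ellipsoid $G\mathcal{S}$.

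My first attempt would exploit that, for fixed $w$, $z\mapsto\|z+w\|_M$ is convex, so its maximum over $\mathcal{E}_{\kappa^\star}$ is attained on the boundary. I would try to show that this boundary point can be taken along the top eigen-direction $v^\star$ of $M^{-1/2}G^\top M G M^{-1/2}$, which lies in $G\mathcal{S}$, for instance by a symmetry or alignment argument pairing $w$ with $\pm v^\star$.

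This works when $M$ is a multiple of the identity: then $\sup_w\|z+w\|_M=\|z\|_M+\sqrt{\lambda(M)}\,\rho$, which depends only on $\|z\|_M$. For general $M$ the Euclidean ball of $w$'s is not an $M$-ball, so the inner supremum depends on the direction of $z$, and the alignment may fail.

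If it fails, I would either replace the Euclidean bound on $w$ by its enclosing $M$-ellipsoid, which gives an "if and only if" up to that relaxation, or add a hypothesis (e.g.\ $G$ invertible, with $G\mathcal{S}$ compared against $\mathcal{E}_{\kappa}$ in the $M$-geometry). Settling whether such a hypothesis is needed is the main obstacle of the proof.
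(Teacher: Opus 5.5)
Your sufficiency direction is correct and complete: with $z=Gx$, Cond.A puts $z$ in $\{z^\top Mz\le\kappa\}$, and Cond.B with $\tilde d=d$, $\tilde e=e$ gives \eqref{eq:thm-ineq}. The paper gives no proof of this theorem; it defers to Theorem~3.4 of Zhong et al., 2022. Downstream it only uses this ``if'' direction: Theorem~\ref{the:therom3} obtains invariance from feasibility of \textit{OP} via Cond.A and Cond.B.

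The genuine gap is necessity, and it cannot be closed, because the ``only if'' part of the statement is false. Your worry about anisotropic $M$ is exactly the obstruction. Take $n=2$, $M=\mathrm{diag}(1,4)$, $G=A+BK=\mathrm{diag}(\tfrac12,0)$ (e.g.\ $A=G$, $K=0$), $\gamma=\tfrac19$, $\varepsilon=0$. If $x^\top Mx\le 1$ then $Gx=(t,0)$ with $|t|\le\tfrac12$. For $d^\top d\le\tfrac19$, writing $a=|d_1|$,
\[
(t+d_1)^2+4d_2^2\le\big(\tfrac12+a\big)^2+4\big(\tfrac19-a^2\big)=\tfrac{25}{36}+a-3a^2\le\tfrac79<1,
\]
so \eqref{eq:thm-ineq} holds. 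Cond.A at $x=(1,0)$ forces $\kappa\ge\tfrac14$. But for every $\kappa\in[\tfrac14,1]$, the point $z=(0,\tfrac{\sqrt\kappa}{2})$ has $z^\top Mz=\kappa$. With $\tilde d=(0,\tfrac13)$ and $\tilde e=0$ one gets $4\big(\tfrac{\sqrt\kappa}{2}+\tfrac13\big)^2\ge\tfrac{49}{36}>1$, so Cond.B fails.

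The reason is geometric. The image $G\mathcal{S}$ is flat along the short axis of $\mathcal{S}$, but the scalar ellipsoid $\{z^\top Mz\le\kappa\}$ that must contain it is not. The margin that ellipsoid leaves along the short axis ($\tfrac14$) is smaller than the disturbance radius ($\tfrac13$). The worst-case $z$ is therefore not in $G\mathcal{S}$, and no alignment with $v^\star$ can work.

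Invertibility of $G$, which you floated as a possible hypothesis, does not help. For $G=\mathrm{diag}(\tfrac12,\eta)$ with $0<\eta\le\tfrac1{10}$, Cond.A still forces $\kappa\ge\tfrac14$ and Cond.B is unchanged. The bound above increases by at most $\tfrac{4\eta}{3}+\eta^2$, so it stays below $1$.

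Your degenerate case is also a counterexample, not something to be ``handled separately.'' Take $G=0$, $M=I_n$, $\gamma=1$, $\varepsilon=0$; then \eqref{eq:thm-ineq} holds. Yet for any $\kappa>0$, the choice $z=\sqrt\kappa\,u$, $\tilde d=u$ with $\|u\|_2=1$ gives $(1+\sqrt\kappa)^2>1$. Invariance only yields $\Delta(\gamma)+\Delta(\varepsilon)\subseteq\mathcal{S}$, not containment in the interior, and $\kappa=0$ is excluded by the statement.

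What your argument does establish is:
\begin{itemize}
\item the ``if'' direction in general;
\item the equivalence under the extra hypothesis $M=cI_n$ with $G\neq 0$. Take $\kappa=\kappa^\star$; then $\sup_{\|w\|_2\le\sqrt\gamma+\sqrt\varepsilon}\|z+w\|_M$ depends only on $\|z\|_M$. The same holds more generally when the combined uncertainty set is a ball in the $M$-norm.
\end{itemize}
So rather than searching for a necessity proof, present the result as a sufficient condition (which is all the paper needs), or state such hypotheses explicitly.
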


Theorem~\ref{the:theorem2} can be proved in a similar manner as Theorem 3.4 in \citep{Zhong2022}, and we omit the proof here for simple presentation.
Theorem \ref{the:theorem2} shows that invariance under both disturbance and communication errors 
can be verified through a two-step contraction argument.  
Then, we introduce an optimization problem to compute the $(\gamma,\varepsilon)$-RSI set. 

\subsection{LMI-Based Controller Synthesis}
\label{subsec:subsection4.2}
Having introduced the notion of $(\gamma,\varepsilon)$-RSI sets and their invariance properties in Section \ref{subsec:subsection3.1}, the next step is to translate (\textbf{Cond.A}) and (\textbf{Cond.B}) into tractable constraints that can be efficiently solved.
This reformulation not only enables efficient verification of invariance but also provides a systematic way to synthesize controllers that respect both disturbance and system state error induced by imperfect communication. 
Importantly, the communication-induced system state error depends on the controller itself, introducing nonlinear coupling in the constraints, but the resulting optimization problem is convex and can be efficiently solved using SDP solvers.

\begin{definition}
\label{def:definition2}
Consider system~\eqref{eq:new-system} with input constraint $u \in \textit{U}$ as in~\eqref{eq:input}, a safety set $\textit{S}$ as in~\eqref{eq:safety-set}, contraction parameter following $\kappa \in (0,1]$, and bounds $\gamma,\varepsilon$ as in~\eqref{eq:new-disturbance} and~\eqref{eq:new-epsilon-set}. 
We denote the optimization problem by \textit{OP}:
\begin{align}
   \textit{OP}: \min_{L,F,U,\tau} \;& -log(det(L)) \label{eq:op}\\[2mm]  
   \text{s.t.}\quad
   &\begin{bmatrix}
      \kappa L & L^\top A^\top+F^\top B^\top \\
      AL + BF & L
   \end{bmatrix} \succeq 0, \label{constr:C1} \\[2mm] 
   & L \succeq \alpha I_n, \label{constr:C2} \\[2mm]   
   & \forall j = 1,\dots,n:\; c_j L c_j^\top \le 1. \label{constr:C3} \\[2mm] 
   & \forall i = 1,\dots,m:\;
   \begin{bmatrix}
      \frac{1}{2}u_{\max}^2 - \tau_i & F(i,:) \\
      F(i,:)^\top & L
   \end{bmatrix} \succeq 0, \label{constr:C5} \\[2mm]
   & \forall i = 1,\dots,m:\;
   \begin{bmatrix}
      \tau_i & F(i,:) \\
      F(i,:)^\top & \mathcal{U}
   \end{bmatrix} \succeq 0, \label{constr:C6} \\[2mm]
   & \varepsilon_\text{up} \mathcal{U} \preceq \alpha^2 I_n, \label{constr:C4} \\[2mm]
   & \beta I_n \preceq L \preceq \rho \beta I_n,
   \label{constr:C7} 
\end{align}
where $A \in \mathbb{R}^{n \times n}$ is the system matrix, $B \in \mathbb{R}^{n \times m}$ is the input matrix, $L \succeq 0 \in \mathbb{R}^{n \times n}$ with $L = M^{-1}$, $F := KL \in \mathbb{R}^{m \times n}$; $\alpha =\Big(\frac{\sqrt{\gamma}+\sqrt{\varepsilon})}{1 - \sqrt{\kappa}}\Big)^2$ if $\kappa \neq 1$, otherwise $\alpha = 0$; $c_j$ is the state constraint introduced in~\eqref{eq:safety-set}; $\tau_1, \dots,\tau_m \in \mathbb{R}$ is decision variable, $\mathcal{U} \succeq 0 \in \mathbb{R}^{n \times n}$ is decision variable; $\beta \in \mathbb{R}$ is a decision variable, $\rho \in \mathbb{R}$ with $1 \le \rho \le \frac{1}{\kappa}$.
\end{definition}

\begin{remark}
Constraint \eqref{constr:C1} guarantees closed-loop contraction and invariance. Constraints \eqref{constr:C2} and \eqref{constr:C4} provide robustness against disturbances and communication errors. Constraint \eqref{constr:C3} ensures set inclusion in the safety region. Constraints \eqref{constr:C5} and \eqref{constr:C6} enforce input limits. Constraint \eqref{constr:C7} improves numerical conditioning and supports the communication error bound. 
\end{remark}

\begin{remark}
Note that the unknown quantity in OP is the system state error bound $\varepsilon$ since $\bar g$ in~\eqref{eq:epsilon_up} and~\eqref{eq:epsilon} is unknown.  
Here, we resolve this issue by setting $\bar g := \sqrt{\kappa \rho}$, with $\kappa \in (0,1]$ and $\rho \in [1,\frac{1}{\kappa}]$.
As a key insight, constraint~\eqref{constr:C1} implies
$
\|A+BK\|_2 \;\le\; \sqrt{
\kappa \, \frac{\lambda_{\max}(L)}{\lambda_{\min}(L)}}
$, and one has $(A+BK)(A+BK)^\top \preceq \bar g^2 I_n$ as in Theorem~\ref{the:theorem4}. Hence, $\bar g \ge \sqrt{\kappa \frac{\lambda_\text{max}(L)}{\lambda_\text{min}(L)}}$ can satisfy the definition of $\bar g$.
Since one has
$
1 \le \frac{\lambda_{\max}(L)}{\lambda_{\min}(L)} \;\le\; \rho,
$
according to~\eqref{constr:C7}, and Remark~\ref{remark2} requires $\bar g \le 1$, we must enforce
$
\kappa \rho \;\le\; 1
$
as we have defined
$
\bar g := \sqrt{\kappa \rho}.
$
Thus, by selecting parameters with \(1 \le \rho \le 1/\kappa\), the coupling between the controller and the bound \(\varepsilon\) becomes tractable.

\end{remark}

Based on the above formulation, we can design a $(\gamma,\varepsilon)$-RSI controller that ensures the ellipsoidal set $\mathcal{S}(M)$ remains invariant in the presence of both disturbances and communication errors.

\begin{theorem}
\label{the:therom3}
Consider the system as in~\eqref{eq:new-system}. 
One has $\exists \kappa \in (0,1]$, $\rho \in [1,\frac{1}{\kappa}]$ and $\gamma, \varepsilon \ge 0$ such that the set $\mathcal{S} := \{\, x \in \textit{X} \mid x^\top L^{-1} x \le 1 \,\}$  
serves as a $(\gamma,\varepsilon)$-RSI set with the associated RSI-based controller $u = K \hat x =F L^{-1} \hat x$, if the optimization problem OP in Definition~\ref{def:definition2} is feasible under selected parameters $(\kappa,\rho)$. 
\end{theorem}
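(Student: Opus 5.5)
We will show that a feasible solution $(L,F,\mathcal{U},\tau,\beta)$ of \textit{OP} for the chosen $(\kappa,\rho)$ yields $M:=L^{-1}$ and $K:=FL^{-1}$ meeting Definition~\ref{def:definition1}, with $\gamma$ as given and $\varepsilon$ defined by \eqref{eq:epsilon}. The plan is to verify (\textbf{Cond.A}) and (\textbf{Cond.B}) of Theorem~\ref{the:theorem2}, then check set inclusion and input admissibility, and finally invoke Theorem~\ref{the:therom1}.

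\emph{Cond.A.} We apply a Schur complement to \eqref{constr:C1}. Since $L\succ 0$ by \eqref{constr:C7}, it gives $\kappa L - L(A+BK)^\top L^{-1}(A+BK)L \succeq 0$. Pre- and post-multiplying by $L^{-1}$ yields $(A+BK)^\top M(A+BK)\preceq \kappa M$, so $x^\top M x\le 1$ implies $x^\top (A+BK)^\top M(A+BK)x\le\kappa$.

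\emph{Cond.B and the communication coupling.} For $z^\top Mz\le\kappa$, $\tilde d\in\Delta(\gamma)$ and $\tilde e\in\Delta(\varepsilon)$, the triangle inequality in the $M$-norm gives
\[
\|z+\tilde d+\tilde e\|_M\le\sqrt{\kappa}+\sqrt{\lambda_{\max}(M)}\,(\sqrt{\gamma}+\sqrt{\varepsilon}).
\]
Constraint \eqref{constr:C2} gives $\lambda_{\max}(M)=1/\lambda_{\min}(L)\le 1/\alpha$. With $\alpha=\big((\sqrt\gamma+\sqrt\varepsilon)/(1-\sqrt\kappa)\big)^2$, the right-hand side is at most $\sqrt\kappa+(1-\sqrt\kappa)=1$. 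The case $\kappa=1$ forces $\gamma=\varepsilon=0$ and is trivial. Theorem~\ref{the:theorem2} then yields robust invariance.

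What must still be justified is that $\varepsilon$ is a valid bound on $e=BKe_{\rm up}$, i.e.\ that the hypothesis $\bar g^2\le1$ of Theorems~\ref{the:theorem5} and~\ref{the:theorem7} holds for the designed $K$. From Cond.A,
\[
\|A+BK\|_2^2\le\kappa\,\frac{\lambda_{\max}(L)}{\lambda_{\min}(L)}\le\kappa\rho
\]
by \eqref{constr:C7}. Hence $\bar g:=\sqrt{\kappa\rho}\le1$ satisfies $GG^\top\preceq\bar g^2I_n$. Theorem~\ref{the:theorem7} then gives $e(k)^\top e(k)\le\varepsilon$ with $\varepsilon$ from \eqref{eq:epsilon} and $\varepsilon_{\rm up}$ from \eqref{eq:epsilon_up}, both computed from this $\bar g$. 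This step is the main obstacle, because $\alpha$ itself depends on $\varepsilon$ and hence on $(\kappa,\rho)$. The feasibility of \textit{OP} under the fixed $(\kappa,\rho)$ is what closes this loop. Constraint \eqref{constr:C4} serves as supplementary robustness on the $K$-dependent part; we only need it as an additional consistency check, not for Cond.B.

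\emph{Safety set and inputs.} Constraint \eqref{constr:C3} gives $\max_{x^\top Mx\le1}c_jx=\sqrt{c_jLc_j^\top}\le1$, so $\mathcal S\subseteq S$. For the input, $u_i=F(i,:)L^{-1}\hat x=F(i,:)L^{-1}x+F(i,:)L^{-1}e_{\rm up}$. We bound the first term by \eqref{constr:C5}, which gives $(F(i,:)L^{-1}x)^2\le \tfrac12u_{\max}^2-\tau_i$ on $\mathcal S$. We bound the second term using \eqref{constr:C6}, \eqref{constr:C4} and \eqref{constr:C2}, which give $(F(i,:)L^{-1}e_{\rm up})^2\le \tau_i\,e_{\rm up}^\top L^{-1}\mathcal U L^{-1}e_{\rm up}\le\tau_i$. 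The inequality $(a+b)^2\le 2(a^2+b^2)$ then gives $u_i^2\le u_{\max}^2$. Combining these with Theorem~\ref{the:therom1} completes the argument.
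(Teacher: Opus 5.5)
Your proposal is correct and follows the paper's proof essentially step for step. You get (Cond.A) from \eqref{constr:C1} by a direct Schur complement in place of the paper's S-procedure, and (Cond.B) from \eqref{constr:C2}, where your $M$-norm triangle inequality is the algebraic form of the paper's boundary-distance argument. Inclusion comes from \eqref{constr:C3}, and the input bound from \eqref{constr:C5}, \eqref{constr:C6}, \eqref{constr:C4} and \eqref{constr:C2}, with $(a+b)^2\le 2(a^2+b^2)$ matching the paper's Young-inequality budget split at $\eta=2$. Finally, $\bar g\le\sqrt{\kappa\rho}$ comes from \eqref{constr:C1} and \eqref{constr:C7}.

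The one soft spot, which you share with the paper, is $\kappa=1$. There, feasibility of \textit{OP} does not force $\gamma=\varepsilon=0$; it only sets $\alpha=0$, and $\rho=1$ gives $\bar g=1$, where \eqref{eq:epsilon_up} degenerates. That case should be excluded, or $\gamma=\varepsilon=0$ assumed, rather than called trivial.
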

The proof of Theorem~\ref{the:therom3} is presented in the Appendix. 
Theorem~\ref{the:therom3} shows that feasibility of \textit{OP} directly generates both the $(\gamma,\varepsilon)$-RSI set and the corresponding communication-aware controller, ensuring invariance of $\mathcal{S}$.


\begin{remark}
\label{re:remark2}
The objective in \eqref{eq:op} is chosen to maximize the size of the $(\gamma,\varepsilon)$-RSI set described in Theorem~\ref{the:therom3}, as the size of the ellipsoid $\{x \in \mathbb{R}^n \mid x^\top L^{-1} x \le 1\}$ is directly proportional to $\det(L)$ (\cite{Boyd1994}).
\end{remark}

So far, we have established the procedure for building and solving $(\gamma,\varepsilon)$-RSI sets, forming the basis for the controller synthesis task as specified in Problem~\ref{prb}. 
The joint computation of the $(\gamma,\varepsilon)$-RSI set and its associated controller resolves the coupling between communication bounds and controller synthesis within a unified co-design framework.
Concretely, Algorithm~\ref{alg:algorithm1} implements the proposed communication-aware co-design framework by integrating two interdependent components:
\begin{enumerate}
    \item the computation of the system state error bound induced by imperfect communication channel and state estimation;
    \item the LMI-based synthesis of a safety controller that ensures invariance of the $(\gamma,\varepsilon)$-RSI set.
\end{enumerate}
\begin{CJK*}{UTF8}{gkai}
    \begin{algorithm}
        \caption{Co-design of Communication-Aware Safety Controller}
        \label{alg:algorithm1}
        \begin{algorithmic}[1]
            \Require $A,B,c_j$ as in Definition~\ref{def:definition2}, $\gamma$ as in~\eqref{eq:new-disturbance}, $u_{\max}$ as in~\eqref{eq:input}, $P_0,Q$ as in Definition~\ref{def:definition3}, $\delta$ as in~\eqref{eq:epsilon_up}, $\Delta \kappa \geq 0$, $\Delta \rho \geq 0$
            \Ensure $K$ as in~\eqref{input}, $M$ as in Definition~\ref{def:definition1}, $\kappa$ as in Definition~\ref{def:definition2}, $\varepsilon$ as in~\eqref{eq:new-epsilon-set}
            \Function {Controller}{$A, B, c_j, \gamma, u_{max}, P_0, Q, \delta,\Delta \kappa, \Delta \rho$}
                \State Initialize $\kappa = 1$, \texttt{feasible = false} 
                \While{$\kappa > 0$ \textbf{and}  $\texttt{feasible} = \texttt{false}$}
                    \State $\rho = 1$
                    \While{$\rho \le \frac{1}{\kappa}$}
                        \State Compute $\varepsilon_\text{up}$ in~\eqref{eq:epsilon_up} and $\varepsilon$ in~\eqref{eq:epsilon}
                        \State Solve \textit{OP} combined of~\eqref{eq:op} -~\eqref{constr:C7}
                        \If{\texttt{feasible}}
                            \State return $K$, $M$, $\kappa$, and $\varepsilon$
                        \Else
                            \State $\rho \gets \rho + \Delta \rho$ 
                        \EndIf
                    \EndWhile
                    \State $\kappa \gets \kappa - \Delta \kappa$
                \EndWhile
            \EndFunction

        \end{algorithmic}
    \end{algorithm}
\end{CJK*}

If Algorithm~\ref{alg:algorithm1} returns $(K,M,\kappa,\varepsilon)$, then the ellipsoid $\{x:x^\top M x\le1\}$ is a $(\gamma,\varepsilon)$-RSI set, with the associated controller $u=K \hat{x}$, which solves Problem~\ref{prb}.

\label{subsec:subsection5.3}

\begin{remark}
This work uses a structured search framework guided by the contraction factor $\kappa$ and a scaling parameter $\rho$.
For each $\kappa \in (0,1]$, we search over $1 \le \rho \le 1/\kappa$ to ensure $\bar{g} := \sqrt{\kappa\rho} \le 1$ as stated in Remark~\ref{remark2}, which is a sufficient condition for the boundedness of the uplink communication error $e_\text{up}(k)$ and the system state error $e(k)$.
\end{remark}

In summary, the proposed framework achieves a co-design between the controller and the communication by embedding the uplink communication error bound and the system state error bound directly into the controller synthesis process.
The communication performance and control synthesis are coupled in a feedback loop: the communication error analysis affects the controller design, and the controller, in turn, reshapes the set of communication error, thereby affecting the system state error.
The mutual dependence between communication error analysis and controller design is handled through an iterative search over contraction parameters as in Algorithm~\ref{alg:algorithm1}, generating a safety-guaranteed solution.

\section{Case Study}
\label{sec:section5}
\begin{figure}
\centering
\includegraphics[width=0.28\textwidth]{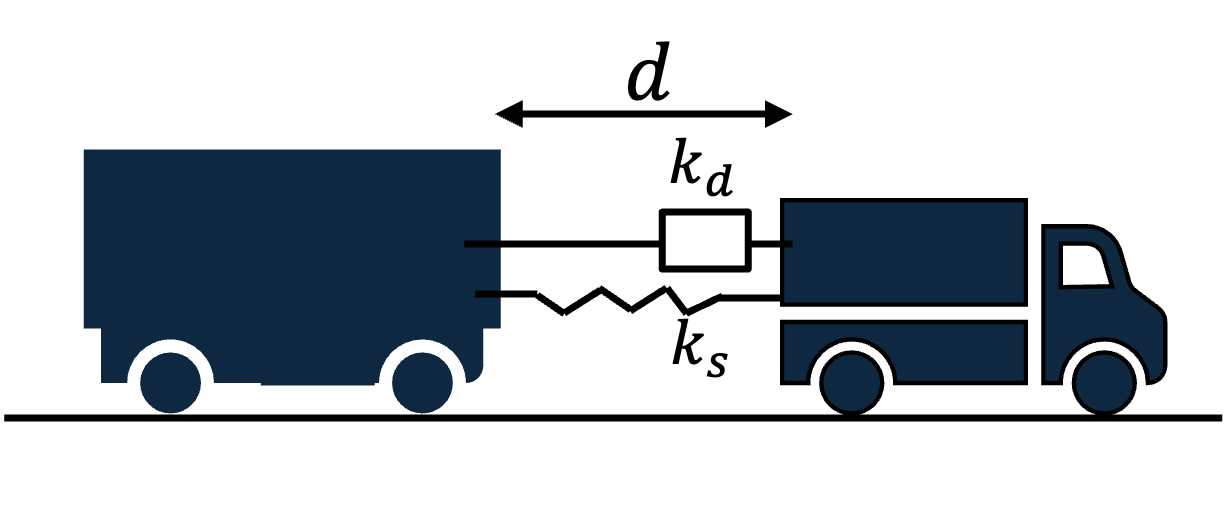}   
\caption{Cruise control for a truck-trailer system.
The coupling between truck and trailer is modeled as a spring-damper mechanism with stiffness $k_s = 4500$ N/kg and damping coefficient $k_d = 4600$ Ns/m. The mass of trailer is set to $m = 1000$ kg, and $d$ represents the distance between the truck and the trailer.}
\label{fig:case}
\end{figure}

\begin{figure}
\centering
\includegraphics[width=0.4\textwidth]{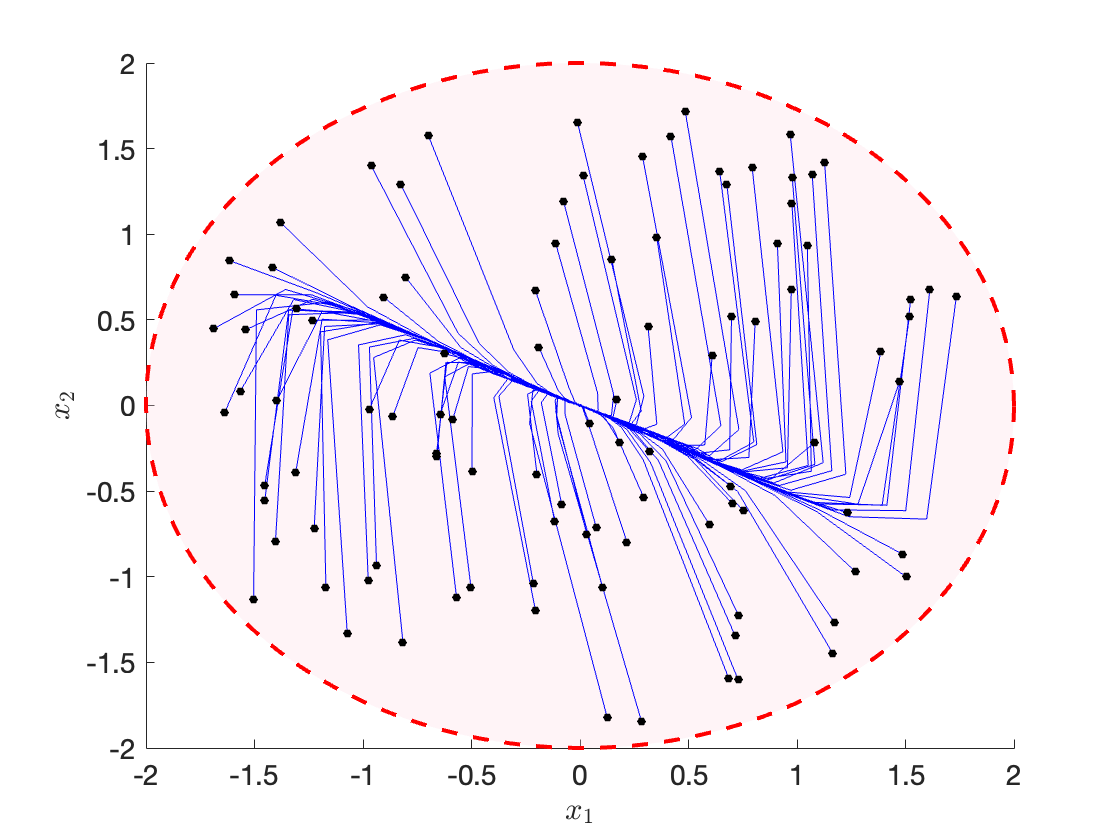}   
\caption{State trajectories of the system  evolving within the $(\gamma,\varepsilon)$-RSI set under communication constraints} 
\label{fig:state}
\end{figure}

\begin{figure}
\centering
\includegraphics[width=0.4\textwidth]{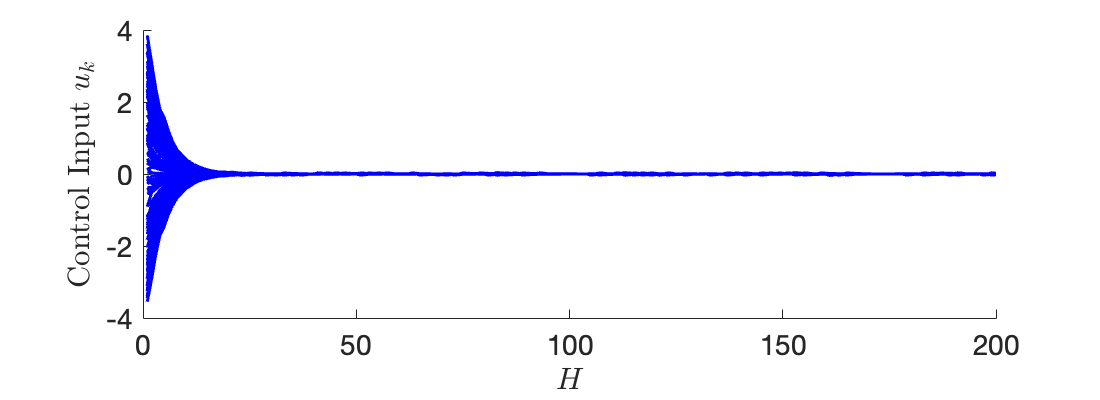}   
\caption{Control inputs $u_k$ applied to the system} 
\label{fig:uk}
\end{figure}
As shown in Figure~\ref{fig:case}, we study the cruise control problem for a truck with a trailer. The system dynamics are described by~\eqref{eq:new-system}, with the system matrices defined as:

\begin{equation}
A := \begin{bmatrix}
0.7247 & 0.1636 \\
-0.7361 & -0.0278
\end{bmatrix}, \quad
B := \begin{bmatrix}
0.0612 \\
0.1636 
\end{bmatrix},
\label{eq:system_matrices}
\end{equation}

where the system state is defined as $x(k) = [x_1(k);x_2(k)]$, with $x_1(k)$ denoting the truck-trailer distance and $x_2(k)$ being the corresponding relative velocity. The control input $u(k)$ represents the acceleration of the truck.
This model here is adapted from ~\citep{case1,case2} by discretizing the continuous dynamics with a sampling time of $\Delta t = 0.5$ s and introducing bounded disturbances $d(k) \in \Delta (\gamma) = [-0.02,0.02] \times [-0.01,0.01]$ defined in~\eqref{eq:new-disturbance} to account for external interferences. 


In this case study, the distance between the truck and the trailer is constrained to the interval $[-2, 2]\ \text{m}$ to maintain the spring-damper connection, and the relative velocity between them is required to remain within $[-2, 2] \ \text{m/s}$ for safety. The acceleration of the truck is constrained to $[-4, 4] \ \text{m/s}^2$. To compute the upper bound of uplink communication error, we select $\delta$ as 0.05. All experiments are conducted in MATLAB 2024b on a MacBook Pro (Apple M4, 24 GB RAM). Optimization problems are solved using YALMIP \citep{YALMIP} with MOSEK \citep{MOSEK}.
By solving the convex optimization problem using the proposed co-design framework as in Algorithm~\ref{alg:algorithm1}, We obtain $K = [2.0927, 0.3514]^\top$, 
\[
 M = L^{-1} = \begin{bmatrix}
0.2500 & 9.8426e^{-4} \\
9.8426e^{-4} & 0.2500
\end{bmatrix},
\]
$\kappa = 0.9029$ and $\varepsilon = 0.0094$.

In the simulation, we select 100 initial states from the safe set $\mathcal{S}$ randomly. The system is simulated over the time steps $H = 200$. Subsequently, the system is operated under the RSI-based controller associated with $\mathcal{S}$. The Kalman filter parameters are set as $P_0 = 10^{-4} I_n$ and $Q = 10^{-4} I_n$. Four type communication uncertainties are simultaneously considered: (i) packet loss with probability 0.3, (ii) uniform quantization using step size 0.05, (iii) bandwidth limitation restricting transmission to every 2 time steps, and (iv) time delay of 2 sampling periods.

After simulation, we obtain the maximum system state error is 0.0017, which is less than $\varepsilon$.  Figure~\ref{fig:state} illustrates the state trajectories $(x_1, x_2)$ of the system under communication uncertainties. Figure~\ref{fig:uk} presents the corresponding control inputs $u_k$ applied to the system. Experimental results verify that all trajectories remain within the $(\gamma,\varepsilon)$-RSI set while satisfying input constraints, demonstrating the effectiveness of the proposed co-design framework.

\section{Conclusion}
\label{sec:section6}
This paper proposed a co-design framework for safety controller synthesis in
NCS under imperfect communication channels. The approach combines
a Kalman-filter-based derivation of state error bound, which is obtained without
explicitly modeling the communication channel, with an LMI-based controller
design method. The resulting $(\gamma,\varepsilon)$-RSI sets provide formal
safety guarantees against bounded disturbances and imperfect communication.  
The case study demonstrated that the synthesized controller maintains all trajectories within the $(\gamma,\varepsilon)$-RSI set under simultaneous packet loss, quantization, bandwidth limitation, and time delay, validating the effectiveness of the proposed approach. In the future, it can also be integrated into higher-level safety architectures for supervising black-box Artificial Intelligent controllers \citep{Zhong2023}.



\appendix
\section{Proof of Theorems} 
\begin{proof}[\textbf{Proof of Theorem~\ref{the:theorem5}}]
From the scalar recursion of the covariance upper bound, we have
\begin{equation}
    \bar p_k = \bar p_0 \bar g^{2k} + \bar q \sum_{n=0}^{k-1} \bar g^{2n}.
\end{equation}
Since $\bar g^2 \leq 1$, the summation is bounded by a convergent geometric series:
\begin{equation}
    \bar p_k 
    \le \bar p_0 \bar g^{2k} + \bar q \sum_{n=0}^{\infty} \bar g^{2n}
    = \bar p_0 \bar g^{2k} + \frac{\bar q}{1 - \bar g^2}.
\end{equation}
According to Definition~\ref{def:definition4}, $\bar p_k$ is exponentially bounded 
with exponent $(1 - \bar g^2)$. 
This ensures that the estimation error decreases geometrically with each time step 
and converges to a fixed upper bound.


Although bounded disturbances $d(k)$ may cause deviations from exact Gaussianity, the uplink communication error $e_\text{up}(k) = \hat x(k) - x(k)$ can still be reasonably approximated by a zero-mean Gaussian random vector with covariance $P_k$, since the stochastic noise components dominate and the linear filtering dynamics tend to smooth non-Gaussian effects. Hence, the inequality $P_k \preceq \bar p_k I_n$ implies that
\[
e_\text{up}(k)^\top e_\text{up}(k) \le \lambda_{\max}(P_k) \| \zeta \|^2 
\le \bar p_k \| \zeta \|^2, \quad \zeta \sim \mathcal N(0,I_n).
\]
Since $\|\zeta\|^2$ follows a chi-squared distribution with $n$ dimension, 
we can construct a bound on the estimation error:
\begin{equation}
    \mathbb{P}\!\left(e_{\mathrm{up}}(k)^\top e_{\mathrm{up}}(k) \le \varepsilon_{\mathrm{up}}\right) \ge 1-\delta,
    \quad
    \varepsilon_{\mathrm{up}}
    = \bar p_k\, \chi^2_{n,\,1-\delta},
    \label{eq:eps-up}
\end{equation}
where $\chi^2_{n,1-\delta}$ denotes the chi-squared scaling factor associated 
with dimension $n$, and $\delta$ is selected as a small design parameter that 
sets the truncation level for the admissible Gaussian noise energy.
The proof is completed.
\end{proof}

\begin{proof}[\textbf{Proof of Theorem~\ref{the:theorem7}}]
From~\eqref{eq:new-uplink-set} and~\eqref{eq:new-error},
taking the $2$-norm and using submultiplicativity,
\begin{align}
  \|e(k)\|_2
  \;\le\;  \|BK\|_2\,\sqrt{\varepsilon_{\mathrm{up}}}\,.
\end{align}

After squaring both sides yields
\begin{equation}
  \|e(k)\|_2^2 \le \|BK\|_2^2 \, \varepsilon_{\mathrm{up}}.
\end{equation}

Since the gain matrix $K$ is unknown, a direct computation of $\|BK\|_2$ is not feasible. However, an upper bound can be derived using the triangle inequality:
\begin{equation}
    \|BK\|_2 \le \|A + BK\|_2 + \|A\|_2 \le \bar{g} + \|A\|_2.
\end{equation}

So far, the system state error bound is
\[
    \varepsilon := (\bar g + ||A||_2)^2\varepsilon_\text{up},
    \label{eq:epsi}
\]
which is computable by solving a SDP problem. This completes the proof.
\end{proof}

\begin{proof}[\textbf{Proof of Theorem~\ref{the:therom3}}]
First, we verify that for any $\kappa \in (0,1]$, \textbf{(Cond.A)} in Theorem~\ref{the:theorem2} is equivalent to constraint~\eqref{constr:C1}. 

By the S-procedure \citep{Boyd2004}, \textbf{(Cond.A)} holds if and only if there exists a scalar $\lambda \ge 0$ such that
\begin{equation}
   \begin{bmatrix}
      (A+BK)^\top M (A+BK) & 0 \\
      0 & -\kappa
   \end{bmatrix}
   \preceq \lambda 
   \begin{bmatrix}
      M & 0 \\
      0 & -1
   \end{bmatrix}.
\end{equation}
From this relation, one can conclude that
\[
   (A+BK)^\top M (A+BK) \preceq \lambda M, \quad \lambda \le \kappa.
\]  
Thus, \textbf{(Cond.A)} holds exactly when \eqref{constr:C1} is satisfied with the change of variables $L=M^{-1}$.

We now turn to condition \textbf{(Cond.B)} in Theorem~\ref{the:theorem2}.  
This condition can be understood geometrically by comparing the ellipsoids $x^\top M x \le 1$ and $x^\top M x \le \kappa$.  
The minimal distance between their boundaries is given by $\sqrt{\lambda_{\min}} - \sqrt{\kappa \lambda_{\min}}$,  
where $\lambda_{\min}$ is the smallest eigenvalue of $M^{-1}$.  
To guarantee robustness against disturbances and communication errors, this distance must be no smaller than the combined uncertainty bound:
\begin{equation}
   \sqrt{\lambda_{\min}} - \sqrt{\kappa \lambda_{\min}} \;\ge\; \sqrt{\gamma} + \sqrt{\varepsilon}.
   \label{eq:A2}
\end{equation}

This leads to two cases: 

(1) When $\kappa \neq 1$, inequality~\eqref{eq:A2} requires
  \[
     \lambda_{\min} \;\ge\; \Big( \frac{\sqrt{\gamma} + \sqrt{\varepsilon}}{1-\sqrt{\kappa}}\Big)^2,
  \]  
which is equivalent to \eqref{constr:C2} with $L \succeq \alpha I$, where $\alpha = \Big(\tfrac{\sqrt{\gamma}+\sqrt{\varepsilon}}{1-\sqrt{\kappa}}\Big)^2$.  

(2) When $\kappa = 1$, inequality~\eqref{eq:A2} holds only if $\gamma=\varepsilon=0$, regardless of the value of $\lambda_{\min}$.  

Therefore, condition \textbf{(Cond.B)} is satisfied precisely when \eqref{constr:C2} holds.

Newt Step, condition~\eqref{eq:safety-set} in Section~\ref{subsec:section2.3} corresponds directly to constraint~\eqref{constr:C3},  
which requires that the ellipsoid $\{x \mid x^\top L^{-1}x \le 1\}$ remain within the safety set $\mathcal{S}$.  
This inclusion holds if and only if 
$
   c_j^\top L^{-1} c_j \le 1, \forall j =1,\dots,n,
$
which is exactly condition~\eqref{constr:C3}.

Next Step, we need to consider the condition~\eqref{constr:C5} -~\eqref{constr:C4} to prove the input constraint.
For each input channel $i\in\{1,\dots,m\}$ with row vector $F(i,:)$,
we require the worst-case input magnitude to respect the amplitude limit $u_{\max}$ under the communication error set~\eqref{eq:epsilon}.


Introduce a vector $\tau \in \mathbb{R}^{m}$ with $\tau_i \ge 0$ and a positive semi-definite matrix $\mathcal{U} \succeq0$. 

(1) For the nominal input $Kx$, the magnitude of each channel $u_i$ can be bounded using $|u_i| \le u_{\max}-\sqrt{\tau_i}$,
with $\tau_i\ge0$ to split the input budget between the state part and the
communication-error part. The exact bound $u_{\max}-\sqrt{\tau_i}$ with $x^\top L^{-1}x\le1$
is nonconvex because of $\sqrt{\tau_i}$. Using Young's inequality \citep{Young1912}
$2ab\le a^2/\eta+\eta b^2$ (any $\eta > 0$), we obtain the affine lower bound
\[
(u_{\max}-\sqrt{\tau_i})^2
\ge\ u_{\max}^2\Bigl(1-\frac{1}{\eta}\Bigr) + (1-\eta)\,\tau_i .
\]
Choosing $\eta=2$ gives $(u_{\max}-\sqrt{\tau_i})^2\ge \tfrac12 u_{\max}^2-\tau_i$, and by the Schur complement \citep{Zhang2006}, this is equivalent to the constraint~\eqref{constr:C5}.

(2)For the deviation caused by communication errors, we need $||Ke_\text{up}(k)||_2\le\sqrt{\tau_i}$. Define $W=\frac{1}{\varepsilon_\text{up}} I_n$ The error-side constraint also requires the application of the Schur complement, but it involves the bilinear term $LWL$.
Subsequently, an auxiliary variable $U$ is introduced to replace the nonconvex constraint involving $LWL$ by enforcing $\mathcal{U} \preceq LWL$.  
However, this relation remains intractable due to its bilinear dependence on $L$.  
To restore convexity, we approximate it conservatively by imposing  
$
    \mathcal{U} \preceq \lambda_{\min}(L)^2 W ,
$
where $\lambda_{\min}(L)$ denotes the smallest eigenvalue of $L$.  
Since $L \succeq \alpha I$ implies $\alpha \leq \lambda_{\min}(L)$, the above relaxation ensures that the resulting conditions are still valid.  
This argument jointly establishes the feasibility of ~\eqref{constr:C6} and~\eqref{constr:C4}.  

Based on the LMI shown in~\eqref{constr:C1}, by applying the Schur complement, we can derive that
\[
\|G\|_2^2 = \|A + BK\|_2^2 \le \kappa \frac{\lambda_{\max}(L)}{\lambda_{\min}(L)}.
\]
However, since $L$ is a decision variable, $\lambda_{\max}(L)/\lambda_{\min}(L)$ cannot be directly obtained.  
We use~\eqref{constr:C7} to solve this problem then it follows that
$
\bar g^2 = \|G\|_2^2 \le \kappa \rho.
$
\end{proof}
     
\end{document}